\documentclass[letterpaper, 10 pt, conference]{ieeeconf}  

\IEEEoverridecommandlockouts                              

\usepackage{amsmath,amssymb,amsfonts}

\usepackage{amsthm}
\usepackage{algorithmic}
\usepackage{graphicx}
\usepackage{textcomp}
\usepackage{xcolor}
\usepackage{bm}
\usepackage{mathtools}
\usepackage{booktabs}

\newtheorem{theorem}{Theorem}
\newtheorem{proposition}{Proposition}
\newtheorem{lemma}{Lemma}
\newtheorem{assumption}{Assumption}
\newtheorem{definition}{Definition}
\newtheorem{remark}{Remark}
\newtheorem{corollary}{Corollary}

\title{\LARGE \bf
Data-Attributed Adaptive Control Barrier Functions:\\
Safety-Certified Training Data Curation via Influence Analysis
}

\author{
Jiachen Li$^{1}$, Shihao Li$^{1}$, and Dongmei Chen$^{1}$%
\thanks{$^{1}$All authors are with the Department of Mechanical Engineering, The University of Texas at Austin, Austin, TX 78712, USA.
        {\tt\small \{jiachenli, shihaoli01301, dmchen\}@utexas.edu}}%
}

\begin{document}

\maketitle
\thispagestyle{empty}
\pagestyle{empty}

\begin{abstract}
Learning-based adaptation of Control Barrier Function (CBF) parameters offers a promising path toward safe autonomous navigation that balances conservatism with performance. Yet the accuracy of the underlying safety predictor is ultimately constrained by training data quality, and no prior work has formally characterized how prediction errors propagate through the adaptive pipeline to degrade closed-loop safety guarantees. We introduce Data-Attributed Adaptive CBF (DA-CBF), a framework that integrates TracIn-based data attribution into adaptive CBF learning. Our theoretical contributions are fourfold: (i) corrected two-sided bounds relating the safety-loss surrogate to the CBF constraint margin; (ii) a safety margin preservation theorem showing that prediction error induces quantifiable margin degradation and, via a smooth parameter selector, yields a genuine closed-loop forward invariance guarantee not conditioned on a fixed trajectory; (iii) a CBF-QP constraint perturbation bound that links prediction accuracy directly to recursive feasibility; and (iv) a principled leave-one-out justification for influence-based data curation under explicit smoothness assumptions. On a DynamicUnicycle2D benchmark, DA-CBF reduces prediction RMSE by 35.6\%, expands the certified safe operating set by 39\%, and achieves collision-free navigation in a 16-obstacle environment where the uncurated baseline incurs 3 collisions.
\end{abstract}

\section{INTRODUCTION}

Safety remains a defining challenge for autonomous systems operating in complex, unstructured environments \cite{amodei2016concrete}. As mobile robots are increasingly deployed in settings ranging from cluttered warehouses to urban streets, the need for rigorous, real-time safety assurances has grown correspondingly urgent. Control Barrier Functions (CBFs) \cite{ames2017cbf} offer an appealing framework for meeting this need: they enforce forward invariance of a designated safe set by imposing a pointwise constraint on the control input, one that can be embedded in a Quadratic Program (QP) and solved in real time.

\begin{figure*}[t]
    \centering
    \includegraphics[width=\textwidth]{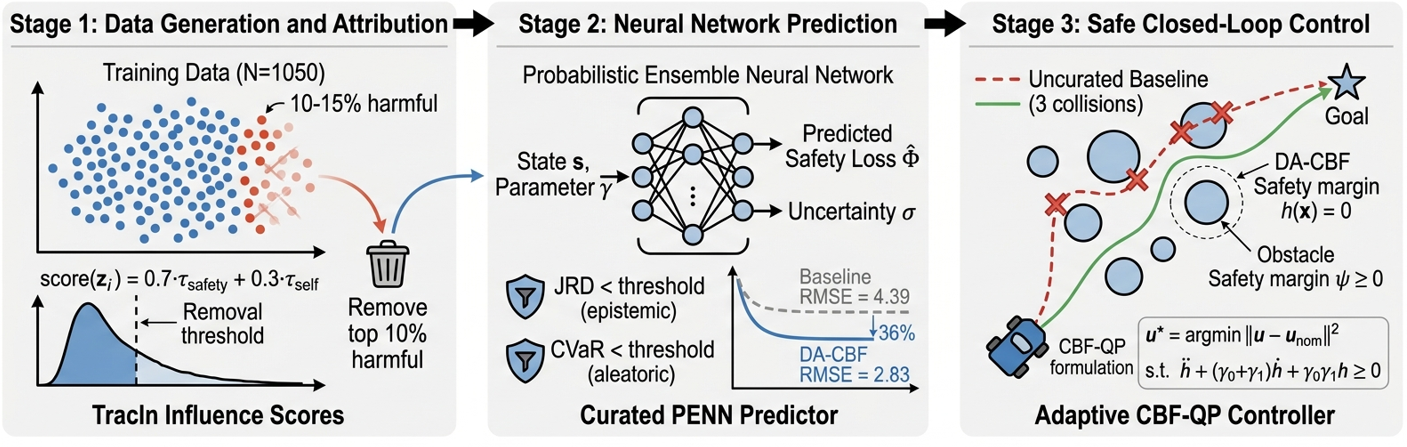}
    \caption{Overview of the DA-CBF pipeline. \textbf{Stage~1:} TracIn-based attribution identifies and removes the top 10\% harmful training samples. \textbf{Stage~2:} Curated PENN predicts safety loss with 36\% lower RMSE, filtered by JRD and CVaR thresholds. \textbf{Stage~3:} Adaptive CBF-QP achieves collision-free navigation where the baseline incurs 3 collisions.}
    \label{fig:overview}
\end{figure*}

In practice, however, the effectiveness of CBF controllers depends heavily on the class-$\mathcal{K}$ function parameters that mediate the trade-off between conservatism and performance. Overly conservative parameters keep the system safe but produce sluggish, inefficient behavior; overly aggressive ones improve task performance at the cost of potential constraint violations. Extensions to higher-order CBFs \cite{xiao2021hocbf} and tunable input-to-state-safe formulations \cite{alan2022tunable} enrich the design space but do not resolve the underlying parameter selection problem, leaving open the question of how to choose parameters adaptively in response to the current operating context.

Kim et al.~\cite{kim2025adaptive} address this gap with an online adaptation scheme built on a Probabilistic Ensemble Neural Network (PENN) \cite{lakshminarayanan2017simple}. Their pipeline predicts a safety-loss surrogate for each candidate parameter setting, then applies a two-stage uncertainty filter: epistemic uncertainty is assessed via Jensen--R\'{e}nyi Divergence \cite{renyi1961measures} to flag out-of-distribution queries, while aleatoric risk is bounded through Conditional Value-at-Risk (CVaR). Only candidates that survive both filters are considered, and the least conservative among them is selected for control.

The resulting architecture is carefully designed, yet its end-to-end safety guarantee can be no stronger than the PENN's predictive accuracy, and that accuracy is, in turn, shaped by the training data. Data for adaptive CBF learning is typically generated through large-scale simulation sweeps over initial conditions, obstacle layouts, and candidate parameters. Such sweeps inevitably introduce noisy labels from near-boundary dynamics, distributional imbalances that over-represent benign configurations, and outliers arising from numerical edge cases. The effect of these corrupted samples is not uniform: they bias predictions most severely in precisely the high-risk regions of state space where accurate safety-loss estimates matter most.

A growing body of work in data attribution, spanning influence functions \cite{koh2017influence}, TracIn \cite{pruthi2020tracin}, TRAK \cite{park2023trak}, and memorization-based diagnostics \cite{feldman2020neural}, provides tools for identifying which training examples drive specific prediction errors. These methods trace the effect of individual samples along the optimization trajectory, yielding principled per-sample scores. What remains absent, however, is \emph{a formal account of how data quality connects to closed-loop safety in adaptive CBF pipelines}. Without such an account, one cannot determine how much prediction improvement is needed to recover a target safety margin, and data curation reduces to heuristic trial and error rather than a certifiable intervention.

The broader literature on learning-based safety-critical control has not filled this gap. Taylor et al.~\cite{taylor2020learning} learn CBF parameters from expert demonstrations but assume clean training data. Castaneda et al.~\cite{castaneda2021gp} use Gaussian processes to handle model uncertainty in CBF feasibility, yet do not examine how training data quality affects the GP posterior. Choi et al.~\cite{choi2020reinforcement} combine reinforcement learning with control Lyapunov and barrier functions under model uncertainty, relying on representative state coverage without formal data quality analysis. Dawson et al.~\cite{dawson2023safe} survey neural Lyapunov, barrier, and contraction methods, noting that data-driven certificate learning generally assumes sufficient training coverage. In each case, prediction error is treated as either negligible or absorbed by conservative margins, without analyzing how data deficiencies degrade the resulting safety certificate. Safe MPC approaches \cite{zeng2021mpc} and onboard CBF implementations for high-speed platforms \cite{singletary2022onboard} make analogous modeling assumptions. The upshot is that practitioners currently lack a principled basis for deciding whether their training data is adequate for a given safety target, or for determining how best to improve it.

We propose Data-Attributed Adaptive CBF (DA-CBF), whose three-stage pipeline is illustrated in Fig.~\ref{fig:overview}. Our work makes four contributions. First, we prove a closed-loop safety margin preservation theorem: Theorem~\ref{thm:safety_degradation} shows that a learned controller inherits the safety of an oracle adaptive controller with certified margin $\delta_{\min}$ whenever the prediction error satisfies $\epsilon \leq \delta_{\min}/(L_\psi L_\mathcal{M})$, with a sampled-data extension and a probabilistic guarantee via a covering argument in Corollary~\ref{cor:prob_safety}. Second, we derive a CBF-QP constraint perturbation bound in Proposition~\ref{prop:qp_feasibility} with analytic Lipschitz constants, tying prediction accuracy directly to recursive feasibility. Third, we establish corrected two-sided $\Phi \leftrightarrow \psi$ bounds in Proposition~\ref{prop:phi_bound} and design a smooth parameter selector with a proven Lipschitz constant in Lemma~\ref{lem:lipschitz_selection}, replacing the switching discontinuity of the original hard-threshold selector. Fourth, we supply a principled leave-one-out motivation for TracIn-based data curation in Proposition~\ref{thm:curation} and show experimentally that DA-CBF reduces RMSE by 35.6\% and achieves collision-free navigation in a challenging multi-obstacle environment.

\section{PRELIMINARIES AND PROBLEM FORMULATION}

\subsection{System Dynamics and CBF}

Consider an affine control system:
\begin{equation}\label{eq:dynamics}
\dot{x} = f(x) + g(x)u, \quad x \in \mathcal{X} \subset \mathbb{R}^n,\; u \in \mathcal{U} \subset \mathbb{R}^m
\end{equation}
A set $\mathcal{S} = \{x \in \mathbb{R}^n : h(x) \geq 0\}$ is forward invariant under CBF control \cite{ames2017cbf,nagumo1942lage} if there exists an extended class-$\mathcal{K}_\infty$ function $\alpha$ such that:
\begin{equation}\label{eq:cbf}
\sup_{u \in \mathcal{U}} \left[ L_f h(x) + L_g h(x) u + \alpha(h(x)) \right] \geq 0, \quad \forall x \in \mathcal{S}
\end{equation}
For relative-degree-2 systems, the CBF condition takes the form:
\begin{equation}\label{eq:rd2}
\ddot{h} + (\gamma_0 + \gamma_1)\dot{h} + \gamma_0 \gamma_1 h \geq 0
\end{equation}
where $\gamma_0, \gamma_1 > 0$ parameterize the class-$\mathcal{K}$ function. The corresponding CBF-QP selects:
\begin{equation}\label{eq:cbfqp}
u^*(x, \gamma) = \arg\min_{u \in \mathcal{U}} \|u - u_{\text{nom}}\|^2 \quad \text{s.t. } \eqref{eq:rd2}
\end{equation}

Throughout, we evaluate on the DynamicUnicycle2D model with state $(x, y, \theta, v)$ and control inputs consisting of acceleration and angular velocity.

\subsection{Adaptive CBF Pipeline}

The adaptive pipeline of \cite{kim2025adaptive} trains a PENN $\hat{\Phi}_\theta(s, \gamma)$ to predict the safety loss $\Phi$ for state-parameter pairs $(s, \gamma)$, where $s$ encodes relative state (distance, velocity, approach angle). At each control step, the pipeline samples candidate $\gamma$ values, predicts $(\hat{\Phi}, \hat{T}_d)$ with uncertainty estimates, filters by JRD (epistemic) and CVaR (aleatoric) thresholds, and selects the least conservative valid parameters.

\subsection{Safety-Loss Surrogate}

The safety loss function introduced in \cite{kim2025adaptive} is:
\begin{equation}\label{eq:phi}
\Phi(x, x_{\text{obs}}, \gamma) = \frac{\lambda_1 e^{-\lambda_2 \psi}}{\beta_1 e^{-\beta_2(\cos \Delta\theta + 1)} d^2 + 1}
\end{equation}
where $d$ denotes obstacle distance, $\Delta\theta$ is the relative heading angle, $\psi$ is the CBF constraint value from \eqref{eq:rd2}, and $(\lambda_1, \lambda_2, \beta_1, \beta_2)$ are design parameters.

To ground our subsequent analysis, we first establish a formal connection between $\Phi$ and the barrier certificate.

\begin{assumption}[Bounded Operating Domain]\label{asm:domain}
The system operates in a compact domain $\mathcal{X}_{\text{op}}$ with obstacle distance $d \in [d_{\min}, d_{\max}]$ where $d_{\min} > 0$, relative heading $\Delta\theta \in [0, \pi]$, and CBF parameters $\gamma \in \Gamma := [\gamma_{\min}, \gamma_{\max}]^2$ with $\gamma_{\min} > 0$.
\end{assumption}

\begin{assumption}\label{asm:lipschitz}
The CBF constraint margin $\psi(\gamma) := \ddot{h} + (\gamma_0 + \gamma_1)\dot{h} + \gamma_0\gamma_1 h$ is Lipschitz continuous in $\gamma$ with constant $L_\psi > 0$ over $\Gamma$, i.e., $|\psi(\gamma) - \psi(\gamma')| \leq L_\psi \|\gamma - \gamma'\|$ for all $\gamma, \gamma' \in \Gamma$.
\end{assumption}

\begin{remark}
Assumption~\ref{asm:lipschitz} is mild for systems with bounded $\dot{h}$ and $h$ on $\mathcal{X}_{\text{op}}$. Since $\psi$ is polynomial in $\gamma$ with coefficients depending on $\dot{h}$ and $h$, Lipschitz continuity on compact $\Gamma$ follows with $L_\psi \leq |\dot{h}| + |\dot{h}| + 2\gamma_{\max}|h| + |h|$.
\end{remark}

\begin{proposition}[Two-Sided Bounds on Safety-Loss Surrogate]\label{prop:phi_bound}
Under Assumption~\ref{asm:domain}, define the denominator bounds:
\begin{align}
D_{\min} &:= \beta_1 e^{-2\beta_2} d_{\min}^2 + 1 \label{eq:dmin}\\
D_{\max} &:= \beta_1 d_{\max}^2 + 1 \label{eq:dmax}
\end{align}
Then for all $(x, x_{\text{obs}}, \gamma) \in \mathcal{X}_{\text{op}} \times \Gamma$, the safety loss $\Phi$ and CBF margin $\psi$ satisfy:
\begin{equation}\label{eq:two_sided}
\frac{\lambda_1}{D_{\max}} e^{-\lambda_2 \psi} \leq \Phi \leq \frac{\lambda_1}{D_{\min}} e^{-\lambda_2 \psi}
\end{equation}
Inverting \eqref{eq:two_sided} yields a usable error bound depending only on the predicted $\hat{\Phi}$ and domain constants: if $|\hat{\Phi} - \Phi| \leq \epsilon_\Phi$, then
\begin{equation}\label{eq:psi_error}
|\hat{\psi} - \psi| \leq \frac{1}{\lambda_2} \ln\!\left(1 + \frac{D_{\max}}{\lambda_1 e^{-\lambda_2 \hat{\psi}_{\max}}} \epsilon_\Phi \right)
\end{equation}
where $\hat{\psi}_{\max}$ is an upper bound on $\psi$ computable from $\hat{\Phi}$ via the left inequality in \eqref{eq:two_sided}.
\end{proposition}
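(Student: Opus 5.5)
The plan has two parts: bound the denominator of \eqref{eq:phi} uniformly on the operating domain of Assumption~\ref{asm:domain}, then invert the exponential using the mean value theorem for the logarithm.

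\textbf{Two-sided bounds.} Write $D := \beta_1 e^{-\beta_2(\cos\Delta\theta+1)} d^2 + 1$, so that $\Phi = \lambda_1 e^{-\lambda_2\psi}/D$.

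1. Take $\Delta\theta\in[0,\pi]$. Then $\cos\Delta\theta + 1 \in [0,2]$, so $e^{-\beta_2(\cos\Delta\theta+1)}\in[e^{-2\beta_2},1]$ for $\beta_2\ge 0$.
2. Combining this with $d\in[d_{\min},d_{\max}]$ and $\beta_1\ge 0$ gives $D_{\min}\le D\le D_{\max}$.
3. Since $\lambda_1 e^{-\lambda_2\psi}>0$ and $D>0$, taking reciprocals yields \eqref{eq:two_sided} directly.

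All that is needed here is nonnegativity of the design parameters, which I will state explicitly as implicit in \eqref{eq:phi}.

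\textbf{Error bound.} Here $\hat\psi$ means the margin implied by $\hat\Phi$ at the same $(d,\Delta\theta)$, that is $\hat\psi = -\lambda_2^{-1}\ln(D\hat\Phi/\lambda_1)$. Then
\[
\psi - \hat\psi = \tfrac{1}{\lambda_2}\bigl(\ln \hat\Phi - \ln\Phi\bigr),
\]
and the factor $D$ cancels. I would bound $|\ln\hat\Phi-\ln\Phi|\le \ln\bigl(1+|\hat\Phi-\Phi|/\min(\Phi,\hat\Phi)\bigr)$, using $\ln(a/b)\le\ln(1+(a-b)/b)$ for $a\ge b>0$. The key step is a lower bound on $\min(\Phi,\hat\Phi)$:

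1. By the left inequality of \eqref{eq:two_sided}, $\Phi\ge \lambda_1 e^{-\lambda_2\psi}/D_{\max}$.
2. If $\psi\le\hat\psi_{\max}$, this gives $\Phi\ge \lambda_1 e^{-\lambda_2\hat\psi_{\max}}/D_{\max}$.
3. The same applies to $\hat\Phi$ once $\hat\psi\le\hat\psi_{\max}$.
4. Substituting gives \eqref{eq:psi_error}, using monotonicity of $\ln(1+\cdot)$.

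\textbf{Main obstacle.} The delicate step is making $\hat\psi_{\max}$ legitimate, i.e.\ an upper bound valid for both the true $\psi$ and $\hat\psi$ that is computable from $\hat\Phi$ alone. I would define it by inverting the left inequality of \eqref{eq:two_sided} at the worst-case true value: $\Phi\ge\hat\Phi-\epsilon_\Phi$ gives
\[
\psi\le\hat\psi_{\max}:=\tfrac{1}{\lambda_2}\ln\!\bigl(\lambda_1/(D_{\max}(\hat\Phi-\epsilon_\Phi))\bigr),
\]
valid when $\hat\Phi>\epsilon_\Phi$. The analogous bound for $\hat\psi$ holds trivially since $\hat\Phi\ge\hat\Phi-\epsilon_\Phi$. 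If instead $\hat\Phi\le\epsilon_\Phi$, the bound is vacuous, and I would note this regime as excluded, or assign $\hat\psi_{\max}=\sup_{\Gamma,\mathcal{X}_{\text{op}}}\psi$, which is finite by compactness.

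Finally, with this choice of $\hat\psi_{\max}$, step 2 above gives $\lambda_1 e^{-\lambda_2\hat\psi_{\max}}/D_{\max} = \hat\Phi-\epsilon_\Phi$ exactly. So \eqref{eq:psi_error} reduces to $\lambda_2^{-1}\ln\bigl(1+\epsilon_\Phi/(\hat\Phi-\epsilon_\Phi)\bigr)$, which is a self-consistent check on the result.
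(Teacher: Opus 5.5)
Your overall route is the paper's: bound the denominator $D$ on the domain, cancel $D$ in $\hat\psi-\psi=\lambda_2^{-1}\ln(\Phi/\hat\Phi)$, then lower-bound the quantity dividing $\epsilon_\Phi$ inside the logarithm via the left inequality. Your use of $\min(\Phi,\hat\Phi)$ is more careful than the paper. The paper's step $|\ln(1+t)|\le\ln(1+|t|)$ with $t=(\hat\Phi-\Phi)/\Phi$ is false for $t<0$, so a lower bound on $\hat\Phi$ is genuinely needed.

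\textbf{The gap is your construction of $\hat\psi_{\max}$.} The left inequality $\frac{\lambda_1}{D_{\max}}e^{-\lambda_2\psi}\le\Phi$ is satisfied by every sufficiently large $\psi$. It rearranges to $\psi\ge\lambda_2^{-1}\ln\bigl(\lambda_1/(D_{\max}\Phi)\bigr)$, a \emph{lower} bound, so it can never bound $\psi$ from above.

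Concretely, take $D=D_{\min}$ (i.e.\ $d=d_{\min}$, $\Delta\theta=0$) and $\hat\Phi=\Phi$. Then $\psi=\hat\psi=\lambda_2^{-1}\ln\bigl(\lambda_1/(D_{\min}\Phi)\bigr)$. This exceeds your $\hat\psi_{\max}$ whenever $\epsilon_\Phi<\Phi(1-D_{\min}/D_{\max})$, which is the typical case given the paper's $D_{\min}\approx1.08$, $D_{\max}\approx26.3$. So both $\psi\le\hat\psi_{\max}$ and the ``trivial'' $\hat\psi\le\hat\psi_{\max}$ are false, and your steps 2--3 rest on hypotheses that do not hold. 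The statement's own phrase ``upper bound \dots\ via the left inequality'' is what misled you. The paper's proof never actually constructs $\hat\psi_{\max}$, so it does not expose the problem.

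Your final inequality $|\hat\psi-\psi|\le\lambda_2^{-1}\ln\bigl(1+\epsilon_\Phi/(\hat\Phi-\epsilon_\Phi)\bigr)$ is nonetheless true, by a direct argument. The bound $\min(\Phi,\hat\Phi)\ge\hat\Phi-\epsilon_\Phi$ follows immediately from $|\hat\Phi-\Phi|\le\epsilon_\Phi$, with no detour through $\psi$. Your $\hat\psi_{\max}$ is then just a reparametrization of $\hat\Phi-\epsilon_\Phi$, not a bound on $\psi$.

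If you want $\hat\psi_{\max}$ to be a genuine upper bound on both $\psi$ and $\hat\psi$, as the statement asserts, invert the \emph{right} inequality: $\hat\psi_{\max}:=\lambda_2^{-1}\ln\bigl(\lambda_1/(D_{\min}(\hat\Phi-\epsilon_\Phi))\bigr)$. This yields \eqref{eq:psi_error} in the looser form $\lambda_2^{-1}\ln\bigl(1+\tfrac{D_{\max}}{D_{\min}}\,\epsilon_\Phi/(\hat\Phi-\epsilon_\Phi)\bigr)$.

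Finally, your fallback $\hat\psi_{\max}=\sup\psi$ when $\hat\Phi\le\epsilon_\Phi$ does not control $\hat\psi$, which diverges as $\hat\Phi\to0^+$. That regime must be excluded, or $\hat\psi_{\max}$ taken as $\max\{\sup\psi,\ \lambda_2^{-1}\ln(\lambda_1/(D_{\min}\hat\Phi))\}$.
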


\begin{proof}
From \eqref{eq:phi}, $\Phi = \lambda_1 e^{-\lambda_2 \psi} / D$ where $D := \beta_1 e^{-\beta_2(\cos\Delta\theta + 1)}d^2 + 1$. On $\mathcal{X}_{\text{op}}$, $e^{-\beta_2(\cos\Delta\theta+1)} \in [e^{-2\beta_2}, 1]$ and $d \in [d_{\min}, d_{\max}]$, so $D_{\min} \leq D \leq D_{\max}$: the minimum occurs at $\cos\Delta\theta = 1$ (head-on, $e^{-2\beta_2}$ smallest) and $d = d_{\min}$, the maximum at $\cos\Delta\theta = -1$ ($e^0 = 1$) and $d = d_{\max}$. Since $\Phi = \lambda_1 e^{-\lambda_2\psi}/D$, larger $D$ decreases $\Phi$ and smaller $D$ increases $\Phi$, giving \eqref{eq:two_sided}.

For \eqref{eq:psi_error}: from $\psi = -\frac{1}{\lambda_2}\ln(\Phi D/\lambda_1)$ and $\hat{\psi} = -\frac{1}{\lambda_2}\ln(\hat{\Phi} D/\lambda_1)$, we get $|\hat{\psi} - \psi| = \frac{1}{\lambda_2}|\ln(\hat{\Phi}/\Phi)| = \frac{1}{\lambda_2}|\ln(1 + (\hat{\Phi}-\Phi)/\Phi)| \leq \frac{1}{\lambda_2}\ln(1 + \epsilon_\Phi/\Phi)$. Using $\Phi \geq (\lambda_1/D_{\max})e^{-\lambda_2\psi}$ and bounding $\psi \leq \hat{\psi}_{\max}$ from the prediction gives \eqref{eq:psi_error}.
\end{proof}

\subsection{Formal Problem Statement}

\begin{definition}[Safety-Weighted Risk]\label{def:risk}
Given a data distribution $\mathcal{P}$ over state-parameter-loss triples $(s, \gamma, \Phi)$, define the safety-weighted generalization risk of predictor $\hat{\Phi}_\theta$:
\begin{equation}\label{eq:risk}
R_{\text{safe}}(\theta) = \mathbb{E}_{(s,\gamma,\Phi) \sim \mathcal{P}} \left[ w(\Phi) \cdot \ell(\hat{\Phi}_\theta(s,\gamma), \Phi) \right]
\end{equation}
where $w(\Phi) = \mathbf{1}[\Phi > \Phi_{\text{thr}}]$ is a safety weight emphasizing high-risk regions, and $\ell$ is the prediction loss.
\end{definition}

\textbf{Problem.} Given a training dataset $\mathcal{D} = \{(s_i, \gamma_i, \Phi_i)\}_{i=1}^N$, identify a subset $\mathcal{D}_{\text{harmful}} \subset \mathcal{D}$ such that retraining on $\mathcal{D} \setminus \mathcal{D}_{\text{harmful}}$ minimizes $R_{\text{safe}}(\theta)$, thereby improving safety-critical prediction accuracy and, through Proposition~\ref{prop:phi_bound}, tightening the bound on CBF constraint margin estimation error.

\section{SAFETY-CERTIFIED DATA CURATION}

\subsection{From Prediction Error to Safety Degradation}

We turn to the central theoretical question: how do prediction errors in $\hat{\Phi}$ translate into degradation of the CBF safety margin?

\begin{assumption}[Smooth Parameter Selection]\label{asm:selection}
The parameter selection mechanism uses a fully smooth soft-max rule:
\begin{equation}\label{eq:softmax_selection}
\gamma^*(s; \hat{\Phi}_\theta) = \frac{\sum_{\gamma \in \Gamma_{\text{cand}}} \gamma \cdot w(\gamma, s)}{\sum_{\gamma \in \Gamma_{\text{cand}}} w(\gamma, s)}
\end{equation}
with smooth weights $w(\gamma, s) := \exp\!\big(J(\gamma)/\tau_s - \kappa\, [\hat{\Phi}_\theta(s,\gamma)]_+\big)$, where $J(\gamma)$ is the aggressiveness criterion, $\tau_s > 0$ is the temperature, $\kappa > 0$ is a safety penalty gain, and $[x]_+ := \max(0, x)$ is smoothly approximated by $\text{softplus}(x) = \ln(1+e^x)$. Candidates with high predicted safety loss $\hat{\Phi}$ receive exponentially suppressed weight, while those with low $\hat{\Phi}$ are favored.
\end{assumption}

\begin{lemma}\label{lem:lipschitz_selection}
Under Assumption~\ref{asm:selection}, $\gamma^*$ is $L_\mathcal{M}$-Lipschitz in $\hat{\Phi}$:
\begin{equation}
\|\gamma^*(\hat{\Phi}) - \gamma^*(\hat{\Phi}')\| \leq L_\mathcal{M} \|\hat{\Phi} - \hat{\Phi}'\|_\infty
\end{equation}
with $L_\mathcal{M} = \kappa \cdot \text{diam}(\Gamma)$, where $\text{diam}(\Gamma) = \sqrt{2}(\gamma_{\max} - \gamma_{\min})$.
\end{lemma}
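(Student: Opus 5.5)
Write $\gamma^*$ as a convex combination $\gamma^*(\hat{\Phi}) = \sum_{\gamma \in \Gamma_{\text{cand}}} p_\gamma(\hat{\Phi})\,\gamma$. Here $p_\gamma = w(\gamma,s)/\sum_{\gamma'} w(\gamma',s)$ is a softmax over the logits
\begin{equation*}
z_\gamma := J(\gamma)/\tau_s - \kappa\,\sigma(\hat{\Phi}_\gamma), \qquad \sigma := \text{softplus}.
\end{equation*}
Only the vector $\hat{\Phi} = (\hat{\Phi}_\gamma)_{\gamma \in \Gamma_{\text{cand}}}$ varies, with $s$ fixed. The proof is then a mean-value/chain-rule argument along the segment $\hat{\Phi}_t = \hat{\Phi}' + t(\hat{\Phi} - \hat{\Phi}')$, $t \in [0,1]$:
\begin{equation*}
\|\gamma^*(\hat{\Phi}) - \gamma^*(\hat{\Phi}')\| \le \sup_t \Big\|\tfrac{d}{dt}\gamma^*(\hat{\Phi}_t)\Big\|.
\end{equation*}
It therefore suffices to bound the directional derivative of $\gamma^*$ in a direction $\Delta$ with $\|\Delta\|_\infty = \|\hat{\Phi}-\hat{\Phi}'\|_\infty$.

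First, compute the derivative. The standard softmax Jacobian gives $\partial p_\gamma/\partial z_{\gamma'} = p_\gamma(\mathbf{1}[\gamma=\gamma'] - p_{\gamma'})$. Hence
\begin{equation*}
\frac{d\gamma^*}{dt} = \sum_\gamma p_\gamma\,(\dot z_\gamma - \bar{\dot z})\,\gamma = \sum_\gamma p_\gamma\,(\dot z_\gamma - \bar{\dot z})(\gamma - c),
\end{equation*}
where $\bar{\dot z} = \sum_{\gamma'} p_{\gamma'} \dot z_{\gamma'}$. The second equality holds for any fixed $c \in \mathbb{R}^2$ because $\sum_\gamma p_\gamma(\dot z_\gamma - \bar{\dot z}) = 0$. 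This centering trick is what produces the factor $\text{diam}(\Gamma)$. Second, $\dot z_\gamma = -\kappa\,\sigma'(\hat{\Phi}_{t,\gamma})\Delta_\gamma$ and $0<\sigma'<1$, so $|\dot z_\gamma| \le \kappa\|\Delta\|_\infty$.

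The main step, and the expected obstacle, is getting exactly the constant $\kappa\,\text{diam}(\Gamma)$ rather than $2\kappa\,\text{diam}(\Gamma)$. A crude triangle inequality gives $|\dot z_\gamma - \bar{\dot z}| \le 2\kappa\|\Delta\|_\infty$, combined with $\|\gamma - c\| \le \text{diam}(\Gamma)$. I would try to sharpen this as follows. Write the derivative as a covariance, $\frac{d\gamma^*}{dt} = \text{Cov}_p(\dot z, \gamma)$ under the distribution $p$. Then use that each component of $\dot z$ lies in an interval of length at most $\kappa\|\Delta\|_\infty$, since all $\dot z_\gamma$ share a sign pattern governed by $\Delta$; more simply, $\dot z_\gamma \in [-\kappa\|\Delta\|_\infty, \kappa\|\Delta\|_\infty]$. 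Next choose $c$ as the midpoint of a minimal enclosing set, so that $\|\gamma - c\| \le \text{diam}(\Gamma)/\sqrt{2}$ on the square $\Gamma$ (the circumradius of a square is $\text{diam}/2$). A covariance-type bound then gives
\begin{equation*}
\|\text{Cov}_p(\dot z,\gamma)\| \le \sup_\gamma |\dot z_\gamma - m| \cdot \sup_\gamma \|\gamma - c\|,
\end{equation*}
where $m$ is the midpoint of the range of $\dot z$. The first factor is at most $\kappa\|\Delta\|_\infty$ and the second at most $\text{diam}(\Gamma)/2$, so the product is comfortably within $\kappa\,\text{diam}(\Gamma)\|\Delta\|_\infty$.

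Concretely, I would expand $\text{Cov}_p(\dot z,\gamma) = \sum_\gamma p_\gamma(\dot z_\gamma - m)(\gamma - c) - (\bar{\dot z} - m)(\gamma^* - c)$. Each of these two terms is bounded by $\kappa\|\Delta\|_\infty \cdot \text{diam}(\Gamma)/2$, and summing them gives the stated $L_\mathcal{M} = \kappa\,\text{diam}(\Gamma)$ with $\text{diam}(\Gamma) = \sqrt{2}(\gamma_{\max}-\gamma_{\min})$. Integrating over $t \in [0,1]$ then completes the proof.
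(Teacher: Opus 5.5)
Your proof is correct and follows the paper's route. The paper also bounds the log-weight sensitivity by $\kappa$ (using $0<\text{softplus}'<1$) and then appeals to ``standard soft-max Lipschitz analysis.'' You carry that analysis out explicitly via $\frac{d\gamma^*}{dt}=\mathrm{Cov}_p(\dot z,\gamma)$, centering at the middle of the square, and your two-term split yields exactly $\kappa\,\text{diam}(\Gamma)$; Cauchy--Schwarz on the covariance would even give $\kappa\,\text{diam}(\Gamma)/2$.

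Two slips are harmless. First, $\Delta$ need not have a uniform sign, so $\dot z$ can range over an interval of length nearly $2\kappa\|\Delta\|_\infty$; your fallback bound $|\dot z_\gamma - m|\le\kappa\|\Delta\|_\infty$ is what you actually use, and it is correct. Second, the circumradius bound is $\|\gamma-c\|\le\text{diam}(\Gamma)/2$, not $\text{diam}(\Gamma)/\sqrt{2}$.
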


\begin{proof}
Each weight $w(\gamma, s)$ is $\kappa$-Lipschitz in $\hat{\Phi}_\theta(s,\gamma)$ since $\partial w/\partial \hat{\Phi} = -\kappa \sigma(\hat{\Phi}) w$ where $\sigma$ is the sigmoid, so $|\partial w/\partial \hat{\Phi}| \leq \kappa w$. The soft-max $\gamma^* = \sum \gamma w / \sum w$ has gradient bounded by $\kappa \cdot \text{diam}(\Gamma)$ via standard soft-max Lipschitz analysis \cite{ames2019cbf}.
\end{proof}

With the selector's regularity established, we can state the main safety result.

\begin{theorem}[Safety Margin Preservation Under Prediction Error]\label{thm:safety_degradation}
Under Assumptions~\ref{asm:domain}--\ref{asm:selection}, let $\hat{\Phi}_\theta$ be the learned predictor, $\Phi$ the true safety loss, and $\gamma^*_{\hat{\Phi}} = \mathcal{M}(\hat{\Phi}_\theta)$, $\gamma^*_\Phi = \mathcal{M}(\Phi)$ the parameters selected via \eqref{eq:softmax_selection}. This is a robustness theorem: given an oracle controller that is already safe, it characterizes how much prediction error the learned controller can tolerate while preserving safety.

\emph{(a) Pointwise margin bound.} If $\|\hat{\Phi}_\theta - \Phi\|_\infty \leq \epsilon$ on $\mathcal{X}_{\text{op}} \times \Gamma$, then at every state $x \in \mathcal{X}_{\text{op}}$:
\begin{equation}\label{eq:margin_bound}
|\psi(\gamma^*_{\hat{\Phi}}) - \psi(\gamma^*_\Phi)| \leq L_\psi \cdot L_\mathcal{M} \cdot \epsilon
\end{equation}
This bound holds globally since $\gamma^*$ is Lipschitz everywhere (Lemma~\ref{lem:lipschitz_selection}).

\emph{(b) Closed-loop forward invariance (margin preservation).} Suppose the oracle adaptive controller with true $\Phi$ maintains a certified safety margin $\delta_{\min} := \inf_{x \in \mathcal{X}_{\text{op}} \cap \mathcal{S}} \psi(\gamma^*_\Phi(x)) > 0$ on the compact set $\mathcal{X}_{\text{op}} \cap \mathcal{S}$. (A certified lower bound on $\delta_{\min}$ can be obtained via an $r$-covering of $\mathcal{X}_{\text{op}} \cap \mathcal{S}$: $\delta_{\min} \geq \min_j \psi(\gamma^*_\Phi(x_j)) - L_\psi L_\mathcal{M} L_\Phi^{\text{true}} r$, using Lipschitzness of $\psi \circ \gamma^*_\Phi$.) Assume further that $\mathcal{X}_{\text{op}} \cap \mathcal{S}$ is forward invariant under the oracle controller, or equivalently, that the theorem holds up to the first exit time from $\mathcal{X}_{\text{op}}$. If $\epsilon \leq \delta_{\min}/(L_\psi L_\mathcal{M})$, then $\psi(\gamma^*_{\hat{\Phi}}(x)) \geq 0$ for all $x \in \mathcal{X}_{\text{op}} \cap \mathcal{S}$, and $\mathcal{S}$ is forward invariant under the learned controller within $\mathcal{X}_{\text{op}}$. The learned controller \emph{inherits} the oracle's safety, with margin degradation bounded by $L_\psi L_\mathcal{M}\epsilon$.

\emph{(c) Sampled-data extension.} For discrete-time implementation with period $\Delta t$, assume bounded dynamics: $\sup_{x \in \mathcal{S}} |\dot{\psi}(x)| \leq V_\psi < \infty$ (which holds on compact $\mathcal{X}_{\text{op}}$ under bounded $f, g, u$). Inter-sample safety is preserved if $\Delta t \leq (\delta_{\min} - L_\psi L_\mathcal{M}\epsilon) / V_\psi$.
\end{theorem}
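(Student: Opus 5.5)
The proof composes the two Lipschitz estimates already available and then feeds the result into a Nagumo-type invariance argument.

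For part (a), I fix a state $x \in \mathcal{X}_{\text{op}}$ and regard $\hat{\Phi}_\theta(s,\cdot)$ and $\Phi(s,\cdot)$ as two vectors indexed by the candidate set $\Gamma_{\text{cand}}$. The hypothesis $\|\hat{\Phi}_\theta - \Phi\|_\infty \leq \epsilon$ bounds the sup-distance between these vectors by $\epsilon$. Lemma~\ref{lem:lipschitz_selection} then gives $\|\gamma^*_{\hat{\Phi}}(x) - \gamma^*_\Phi(x)\| \leq L_\mathcal{M}\epsilon$. Since $\gamma^*$ is a convex combination of candidates, both selected parameters lie in $\Gamma$, so Assumption~\ref{asm:lipschitz} applies at the fixed $x$ and yields \eqref{eq:margin_bound}. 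No trajectory enters this step, which is why the bound is global.

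For part (b), I first combine (a) with the definition of $\delta_{\min}$. At every $x \in \mathcal{X}_{\text{op}} \cap \mathcal{S}$,
\[
\psi(\gamma^*_{\hat{\Phi}}(x)) \geq \psi(\gamma^*_\Phi(x)) - L_\psi L_\mathcal{M}\epsilon \geq \delta_{\min} - L_\psi L_\mathcal{M}\epsilon \geq 0 .
\]
This says the relative-degree-2 constraint \eqref{eq:rd2} is satisfied with the learned parameters, so the closed loop obeys $\ddot h + (\gamma_0+\gamma_1)\dot h + \gamma_0\gamma_1 h \geq 0$ with $\gamma_0,\gamma_1 \geq \gamma_{\min} > 0$.

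The delicate step is turning this into forward invariance, because the parameters $\gamma^*_{\hat{\Phi}}(x(t))$ now vary in time. I would introduce the auxiliary barriers $h_1 := \dot h + \gamma_0 h$ and set $\psi = \dot h_1 + \gamma_1 h_1 + (\text{terms in }\dot\gamma_0)$. The smoothness of the selector in Assumption~\ref{asm:selection} guarantees that $\dot\gamma_0$ exists. I then apply the comparison lemma, or Nagumo's theorem \cite{nagumo1942lage}, twice: first to obtain $h_1 \geq 0$, and then $h \geq 0$.

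To keep this within the paper's framing, I would run the argument on the interval before the first exit time from $\mathcal{X}_{\text{op}}$, as the statement permits. On that interval, compactness controls the $\dot\gamma$ correction term. I would absorb that term into the strict margin $\delta_{\min} - L_\psi L_\mathcal{M}\epsilon$ when the inequality on $\epsilon$ is strict. The boundary case is handled by reading forward invariance in the Nagumo sense on $\partial\mathcal{S}$.

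The parenthetical covering bound in (b) follows directly. Pick any $x$ and a center $x_j$ within distance $r$ of it. Applying the Lipschitz chain $\psi\circ\gamma^*\circ\Phi$ with constants $L_\psi$, $L_\mathcal{M}$, $L_\Phi^{\text{true}}$ gives $\psi(\gamma^*_\Phi(x)) \geq \psi(\gamma^*_\Phi(x_j)) - L_\psi L_\mathcal{M} L_\Phi^{\text{true}} r$.

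For part (c), I consider a sampling instant $t_k$, where (b) gives $\psi(t_k) \geq \delta_{\min} - L_\psi L_\mathcal{M}\epsilon$. With zero-order hold on $[t_k, t_k+\Delta t)$, the mean value theorem and the bound $|\dot\psi| \leq V_\psi$ give $\psi(t) \geq \delta_{\min} - L_\psi L_\mathcal{M}\epsilon - V_\psi \Delta t$. This is nonnegative under the stated condition on $\Delta t$, so the continuous-time argument of (b) applies on each inter-sample interval, and induction over $k$ finishes the proof.

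I expect the main obstacle to be the time-varying-parameter issue in (b). Pointwise nonnegativity of $\psi$ for frozen $\gamma$ does not by itself give invariance of a relative-degree-2 barrier whose parameters change along the trajectory. Handling this requires either the $\dot\gamma$ absorption described above or an assumption that $\gamma$ is held constant between updates.
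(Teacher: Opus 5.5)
Your part (a), part (c), and the first inequality chain in (b) coincide with the paper's proof. The paper then finishes (b) in one line: once $\psi(\gamma^*_{\hat{\Phi}}(x))\ge 0$ is known, it cites \cite[Thm.~2]{ames2017cbf} and does not address the fact that $\gamma^*_{\hat{\Phi}}(x(t))$ varies along the trajectory. You correctly flag this as the delicate point, but your repair does not go through under the stated hypotheses.

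With $h_1=\dot h+\gamma_0 h$ one has exactly $\dot h_1+\gamma_1 h_1=\psi+\dot\gamma_0 h$. The first comparison step therefore needs $\psi+\dot\gamma_0 h\ge 0$ on $\{h_1=0\}$, i.e.\ a residual margin $\delta_{\min}-L_\psi L_\mathcal{M}\epsilon\ge\sup_{\{h_1=0\}}(-\dot\gamma_0\,h)$. Compactness only bounds $|\dot\gamma_0|$ by something like $L_\mathcal{M}L_\Phi^{\text{nn}}\sup\|\dot x\|$. Nothing in $\epsilon\le\delta_{\min}/(L_\psi L_\mathcal{M})$ makes the residual margin dominate $|\dot\gamma_0|\,h$, and at equality the residual is zero, so there is nothing to absorb the term into. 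Reading invariance in the Nagumo sense on $\partial\mathcal{S}$ does not help, because the problematic set is $\{h_1=0\}$, where $h$ is generally positive and the correction term does not vanish. Your fallback of freezing $\gamma$ between updates is not sufficient on its own either. At each update $h_1$ jumps by $(\gamma_0^{+}-\gamma_0^{-})h$, which is negative whenever $\gamma_0$ decreases while $h>0$.

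Independently of the $\dot\gamma$ term, the two-step comparison requires $h_1(x(0))\ge 0$, and this does not follow from $x(0)\in\mathcal{S}$. If $h(x(0))=0$ and $\dot h(x(0))<0$, the state leaves $\mathcal{S}$ immediately whatever $\psi$ is, since $\dot h=L_f h$ cannot be influenced by $u$ at relative degree two. So your route certifies at most invariance of $\mathcal{S}\cap\{h_1\ge 0\}$, unless you extract the initial condition from the oracle-invariance hypothesis, which your argument never uses. Closing (b) along your lines therefore needs additional explicit hypotheses:
\begin{itemize}
\item either a quantified slack $\delta_{\min}-L_\psi L_\mathcal{M}\epsilon\ge\sup|\dot\gamma_0|\,h$ or a dwell-time/monotonicity condition on $\gamma$;
\item an initial condition in $\mathcal{S}\cap\{h_1\ge0\}$.
\end{itemize}
Your (c) appeals to (b) on each inter-sample interval, so it inherits the same gap.

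A smaller point: your covering chain $\psi\circ\gamma^*\circ\Phi$ ignores the direct dependence of $\psi$ on $x$ through $h,\dot h,\ddot h$. The Lipschitz constant of $x\mapsto\psi(x,\gamma^*_\Phi(x))$ therefore needs an extra $\sup\|\partial_x\psi\|$ term; the paper's parenthetical has the same omission.
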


\begin{proof}
\emph{(a)} By Lemma~\ref{lem:lipschitz_selection}, $\|\gamma^*_{\hat{\Phi}} - \gamma^*_\Phi\| \leq L_\mathcal{M} \epsilon$ at every $x$. By Assumption~\ref{asm:lipschitz}, $|\psi(\gamma^*_{\hat{\Phi}}) - \psi(\gamma^*_\Phi)| \leq L_\psi L_\mathcal{M} \epsilon$.

\emph{(b)} From (a), at every $x \in \mathcal{X}_{\text{op}}$: $\psi(\gamma^*_{\hat{\Phi}}(x)) \geq \psi(\gamma^*_\Phi(x)) - L_\psi L_\mathcal{M}\epsilon \geq \delta_{\min} - L_\psi L_\mathcal{M}\epsilon \geq 0$. Since $\psi \geq 0$ is the CBF constraint \eqref{eq:rd2}, the CBF-QP \eqref{eq:cbfqp} is feasible with a positive margin at every state. By \cite[Thm.~2]{ames2017cbf}, the resulting control renders $\mathcal{S}$ forward invariant. Crucially, this holds for \emph{any} trajectory starting in $\mathcal{S}$, not just a fixed one, because the margin bound is state-uniform.

\emph{(c)} Between samples at $t_k$ and $t_{k+1} = t_k + \Delta t$, $\psi$ evolves continuously. A Taylor bound gives $\psi(t) \geq \psi(t_k) - |\dot{\psi}|\Delta t \geq (\delta_{\min} - L_\psi L_\mathcal{M}\epsilon) - |\dot{\psi}|\Delta t \geq 0$ under the stated sampling condition.
\end{proof}

\begin{remark}[Safety Budget]
Theorem~\ref{thm:safety_degradation} is best understood as a robustness result: it quantifies the prediction error budget $\epsilon^* = \delta_{\min}/(L_\psi L_\mathcal{M})$ that the learned controller can tolerate without forfeiting the oracle's safety margin. This budget is computable offline from the oracle margin and the analytic Lipschitz constants. The role of DA-CBF is to reduce $\epsilon$ through data curation (Proposition~\ref{thm:curation}), thereby making it more likely that $\epsilon \leq \epsilon^*$.
\end{remark}

\begin{assumption}[Prediction Error Regularity]\label{asm:error}
The true safety loss $\Phi$ is $L_\Phi^{\text{true}}$-Lipschitz in $x$ on $\mathcal{X}_{\text{op}}$ (which holds since $\Phi$ is a smooth function of system state under bounded dynamics). The learned predictor $\hat{\Phi}_\theta$ is $L_\Phi^{\text{nn}}$-Lipschitz in $x$ (which holds for Lipschitz-bounded neural networks on compact domains). The prediction error $e(x, \gamma) := \hat{\Phi}_\theta(x, \gamma) - \Phi(x, \gamma)$ is therefore $L_e$-Lipschitz with $L_e \leq L_\Phi^{\text{nn}} + L_\Phi^{\text{true}}$. Additionally, the pointwise errors at any fixed $(x, \gamma)$ are sub-Gaussian with parameter $\sigma^2$ over PENN training randomness (random initialization and mini-batch ordering).
\end{assumption}

\begin{corollary}[Probabilistic Forward Invariance]\label{cor:prob_safety}
Under Assumption~\ref{asm:error}, let $\{x_j\}_{j=1}^M$ be an $r$-covering of $\mathcal{X}_{\text{op}}$ with $M \leq (\text{diam}(\mathcal{X}_{\text{op}})/r)^n$ balls. Then:
\begin{equation}\label{eq:prob_safety}
\mathbb{P}_{\theta}\!\left[\sup_{x, \gamma} |e(x,\gamma)| \leq \epsilon \right] \geq 1 - 2M|\Gamma_{\text{cand}}|\, e^{-(\epsilon - L_e r)^2/(2\sigma^2)}
\end{equation}
provided $\epsilon > L_e r$. Combined with Theorem~\ref{thm:safety_degradation}(b): setting $\epsilon = \delta_{\min}/(L_\psi L_\mathcal{M})$ gives a probabilistic forward invariance guarantee with failure probability decaying exponentially in $\delta_{\min}^2$.
\end{corollary}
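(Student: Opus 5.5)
The plan is a standard covering-plus-union-bound argument. It has three steps: control the error at the finitely many net points, extend to all of $\mathcal{X}_{\text{op}}$ using the Lipschitz regularity in Assumption~\ref{asm:error}, and then feed the resulting uniform bound into Theorem~\ref{thm:safety_degradation}(b).

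\textbf{Step 1: net points.} Fix the $r$-covering $\{x_j\}_{j=1}^M$ and the finite candidate set $\Gamma_{\text{cand}}$. The supremum over $\gamma$ is understood over $\Gamma_{\text{cand}}$. This is consistent with the selector \eqref{eq:softmax_selection}, which only queries $\hat{\Phi}_\theta$ at candidates, so the $\|\cdot\|_\infty$ in Lemma~\ref{lem:lipschitz_selection} is over $\Gamma_{\text{cand}}$. For each of the $M|\Gamma_{\text{cand}}|$ pairs $(x_j,\gamma)$, sub-Gaussianity over training randomness gives
\[
\mathbb{P}_\theta\bigl[|e(x_j,\gamma)|>t\bigr]\le 2e^{-t^2/(2\sigma^2)}.
\]
This two-sided tail assumes the error is centered. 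If $\mathbb{E}_\theta e\neq 0$, the bias has to be absorbed into the threshold, and I will state this explicitly. A union bound over all pairs then shows that, with probability at least $1-2M|\Gamma_{\text{cand}}|e^{-t^2/(2\sigma^2)}$, we have $\max_{j,\gamma}|e(x_j,\gamma)|\le t$.

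\textbf{Step 2: extension to all $x$.} Take any $x\in\mathcal{X}_{\text{op}}$ and pick $j$ with $\|x-x_j\|\le r$. By the $L_e$-Lipschitz property of $e$ in $x$,
\[
|e(x,\gamma)|\le |e(x_j,\gamma)|+L_e r\le t+L_e r.
\]
Choosing $t=\epsilon-L_e r>0$, which requires exactly the proviso $\epsilon>L_e r$, yields \eqref{eq:prob_safety}. The covering-number estimate $M\le(\text{diam}(\mathcal{X}_{\text{op}})/r)^n$ is the usual volumetric bound, possibly up to a dimensional constant, and I will simply take it as the stated choice of net.

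\textbf{Step 3: forward invariance.} On the event above, $\|\hat{\Phi}_\theta-\Phi\|_\infty\le\epsilon$ holds on $\mathcal{X}_{\text{op}}\times\Gamma_{\text{cand}}$. This is precisely the hypothesis of Theorem~\ref{thm:safety_degradation}(a)--(b). With $\epsilon=\delta_{\min}/(L_\psi L_\mathcal{M})$, part (b) gives $\psi(\gamma^*_{\hat{\Phi}})\ge 0$ on $\mathcal{X}_{\text{op}}\cap\mathcal{S}$ and hence forward invariance. The failure probability is then bounded by
\[
2M|\Gamma_{\text{cand}}|\exp\!\bigl(-(\delta_{\min}/(L_\psi L_\mathcal{M})-L_e r)^2/(2\sigma^2)\bigr).
\]
For fixed $r$, or with $r$ chosen proportional to $\delta_{\min}$ so that $L_e r\le \epsilon/2$, this decays like $\exp(-c\,\delta_{\min}^2)$. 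The tradeoff is that $M$ then grows polynomially in $1/\delta_{\min}$, which the exponential term dominates.

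\textbf{Main obstacle.} The delicate point is matching the event in \eqref{eq:prob_safety} to the hypothesis of Theorem~\ref{thm:safety_degradation}. That theorem is stated with a supremum over all of $\Gamma$, while the union bound only controls finitely many $\gamma$. I would resolve this by noting that the selector depends on $\hat{\Phi}$ only through $\Gamma_{\text{cand}}$, so the Lipschitz estimate of Lemma~\ref{lem:lipschitz_selection} needs only the candidate-wise sup norm. The alternative would be a second covering in $\gamma$ using $\gamma$-Lipschitzness of $e$, and this reduction avoids it. A secondary point worth flagging is that the sub-Gaussian assumption is pointwise, and the sub-Gaussian bound must be applied at deterministic net points. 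This is fine because the covering is fixed independently of $\theta$.
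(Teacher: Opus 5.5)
Your proposal is correct and matches the route the paper intends: the paper gives no written proof of this corollary and only attributes it to a covering argument. The bound's form, namely the factor $2M|\Gamma_{\text{cand}}|$ and the exponent $(\epsilon - L_e r)^2/(2\sigma^2)$, is exactly what your union bound over net--candidate pairs produces, followed by the $L_e$-Lipschitz extension with $t=\epsilon-L_e r$ and substitution of $\epsilon=\delta_{\min}/(L_\psi L_\mathcal{M})$ into Theorem~\ref{thm:safety_degradation}(b). Your two clarifications correctly fill in what the paper leaves implicit: reading $\sup_\gamma$ over $\Gamma_{\text{cand}}$ suffices because the selector, and hence Lemma~\ref{lem:lipschitz_selection}, only sees candidate values, and the two-sided tail needs the pointwise errors to be centered (any bias would have to be absorbed into $\epsilon$).
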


\begin{corollary}[End-to-End Safety of Learned Adaptive CBF-QP]\label{cor:end2end_safety}
Under Assumptions~\ref{asm:domain}--\ref{asm:error}, suppose the oracle controller maintains margin $\delta_{\min} > 0$, $\mathcal{X}_{\text{op}} \cap \mathcal{S}$ is forward invariant under the oracle, LICQ holds for the CBF-QP, and $\sup_{x \in \mathcal{S}} |\dot{\psi}| \leq V_\psi$. If $\epsilon \leq \delta_{\min}/(L_\psi L_\mathcal{M})$ and $\Delta t \leq (\delta_{\min} - L_\psi L_\mathcal{M}\epsilon)/V_\psi$, then the sampled-data learned adaptive CBF-QP controller is: (i) well-defined (QP feasible at each sample, Prop.~\ref{prop:qp_feasibility}), (ii) forward invariant in continuous time within $\mathcal{X}_{\text{op}}$ (Thm.~\ref{thm:safety_degradation}b,c), and (iii) satisfies these guarantees with probability $\geq 1 - 2M|\Gamma_{\text{cand}}|e^{-(\epsilon - L_e r)^2/(2\sigma^2)}$ over PENN training (Cor.~\ref{cor:prob_safety}).
\end{corollary}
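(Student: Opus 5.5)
The plan is to prove Corollary~\ref{cor:end2end_safety} by conditioning on a single high-probability event and then checking (i)--(iii) deterministically on that event. Define
\[
E := \Big\{\sup_{x\in\mathcal{X}_{\text{op}},\,\gamma\in\Gamma_{\text{cand}}} |e(x,\gamma)| \le \epsilon\Big\}.
\]
Corollary~\ref{cor:prob_safety} applies, since Assumption~\ref{asm:error} holds and we may assume $\epsilon > L_e r$ by choosing the covering radius small. It gives $\mathbb{P}_\theta[E] \ge 1 - 2M|\Gamma_{\text{cand}}|e^{-(\epsilon-L_e r)^2/(2\sigma^2)}$. Once (i) and (ii) are shown to hold on $E$, claim (iii) follows immediately.

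There is one technicality to record first. Theorem~\ref{thm:safety_degradation}(a) is stated with a sup-norm over $\mathcal{X}_{\text{op}}\times\Gamma$. However, the selector~\eqref{eq:softmax_selection} only evaluates $\hat{\Phi}_\theta$ at the finite set $\Gamma_{\text{cand}}$. The proof of Lemma~\ref{lem:lipschitz_selection} only uses the perturbation of the weights at the candidates, so on $E$ we still get $\|\gamma^*_{\hat\Phi}(x)-\gamma^*_\Phi(x)\|\le L_{\mathcal M}\epsilon$ at every $x$. Hence part (a) holds on $E$.

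On $E$, we then invoke Theorem~\ref{thm:safety_degradation}(b). The hypotheses carried over are the oracle margin $\delta_{\min}>0$, forward invariance of $\mathcal{X}_{\text{op}}\cap\mathcal{S}$ under the oracle, and $\epsilon\le\delta_{\min}/(L_\psi L_{\mathcal M})$. These give the state-uniform bound
\[
\psi(\gamma^*_{\hat\Phi}(x)) \;\ge\; \delta_{\min}-L_\psi L_{\mathcal M}\epsilon \;\ge\; 0
\qquad\text{for all } x\in\mathcal{X}_{\text{op}}\cap\mathcal{S}.
\]

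For (i), we pass this margin into Proposition~\ref{prop:qp_feasibility}. Nonnegativity of $\psi$ at the selected $\gamma^*_{\hat\Phi}(x_k)$ means the constraint~\eqref{eq:rd2} is satisfiable at each sampling instant $x_k$. LICQ then makes the QP solution $u^*(x_k,\gamma^*_{\hat\Phi})$ unique and locally Lipschitz, so the sampled controller is well-defined. Recursive feasibility follows by induction on $k$, provided each $x_k$ stays in $\mathcal{X}_{\text{op}}\cap\mathcal{S}$, which is exactly what (ii) supplies.

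For (ii), we use Theorem~\ref{thm:safety_degradation}(c). The hypotheses $\sup|\dot\psi|\le V_\psi$ and $\Delta t\le(\delta_{\min}-L_\psi L_{\mathcal M}\epsilon)/V_\psi$ give $\psi(t)\ge0$ on each inter-sample interval $[t_k,t_{k+1}]$. Since $\psi\ge0$ is the second-order CBF condition, the standard higher-order CBF invariance argument \cite{ames2017cbf,xiao2021hocbf} keeps $h\ge0$ on that interval, so $x(t_{k+1})\in\mathcal{S}$ and the induction closes. The result holds up to the first exit time from $\mathcal{X}_{\text{op}}$, as in the theorem.

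The main obstacle is making this induction non-circular: feasibility at $t_{k+1}$ needs $x(t_{k+1})\in\mathcal{X}_{\text{op}}\cap\mathcal{S}$, and that in turn needs feasibility on $[t_k,t_{k+1}]$. I would handle it by strong induction on $k$, using the zero-order hold so that $u$ is constant on each interval. A further point is that $\psi$ for the held input is what $V_\psi$ controls. This must be read as a bound along the closed-loop held-input trajectory, and I would state it explicitly as the interpretation of the assumption.
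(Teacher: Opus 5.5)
Your proposal is correct and follows the paper's route. The paper gives no separate proof; it simply chains Cor.~\ref{cor:prob_safety} for the probability of the uniform-error event, Thm.~\ref{thm:safety_degradation}(b,c) for the state-uniform margin and inter-sample invariance, and Prop.~\ref{prop:qp_feasibility} for feasibility. Conditioning on the single event $E$, noting that the selector only reads $\hat{\Phi}_\theta$ on $\Gamma_{\text{cand}}$, and inducting over sampling intervals are exactly the details it leaves implicit.

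The one deviation is in (i), and it is harmless. You nominally cite Prop.~\ref{prop:qp_feasibility}, but you actually obtain feasibility directly from $\psi(\gamma^*_{\hat{\Phi}}(x_k))\ge 0$, the same inference used in the proof of Thm.~\ref{thm:safety_degradation}(b). That is the better justification, because the proposition's own hypothesis $L_\Psi L_\mathcal{M}\epsilon<\inf_{x\in\mathcal{X}_{\text{op}}}\mu(x)$ uses a different constant and a strict inequality, so it is not implied by the corollary's condition $\epsilon\le\delta_{\min}/(L_\psi L_\mathcal{M})$.
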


\subsection{Influence-Based Curation: Motivation and Justification}

We now provide a principled motivation for the TracIn-based curation strategy. The key insight is that TracIn scores locally approximate the effect of data removal on safety-critical prediction quality; this motivates the curation score, while the actual finite-removal improvements are validated empirically in Section~\ref{sec:experiments}.

\begin{definition}[Empirical Safety-Weighted Risk]
The empirical safety-weighted risk on test set $\mathcal{D}_{\text{test}}$ is:
\begin{equation}
\hat{R}_{\text{safe}}(\theta; \mathcal{D}_{\text{test}}) = \frac{1}{|\mathcal{D}_{\text{unsafe}}|} \sum_{z_t \in \mathcal{D}_{\text{unsafe}}} \ell(\hat{\Phi}_\theta(s_t, \gamma_t), \Phi_t)
\end{equation}
where $\mathcal{D}_{\text{unsafe}} = \{z_t \in \mathcal{D}_{\text{test}} : \Phi_t > \Phi_{75\%}\}$ and $\ell$ is the Gaussian NLL loss.
\end{definition}

\begin{assumption}[Loss Smoothness]\label{asm:smooth}
The loss function $\ell(z; w)$ is twice continuously differentiable in $w$ with $\|\nabla^2 \ell(z; w)\| \leq B$ for all $z \in \mathcal{D}$ and $w$ in the training trajectory, where $B > 0$ is the Hessian bound.
\end{assumption}

\begin{proposition}[TracIn as LOO Surrogate]\label{thm:curation}
Under Assumption~\ref{asm:smooth}, TracIn's safety-influence score satisfies:
\begin{equation}
\hat{R}_{\text{safe}}(\theta_{\mathcal{D}}) - \hat{R}_{\text{safe}}(\theta_{\mathcal{D} \setminus \{z_i\}}) = \tau_{\text{safety}}(z_i) + \mathcal{E}_i
\end{equation}
where $|\mathcal{E}_i| \leq B \sum_k \eta_k^2 \|\nabla\ell(z_i; w_k)\|^2 / 2$. In particular, for small learning rates ($\eta_k \ll 1$), $\tau_{\text{safety}}(z_i)$ is a first-order-accurate approximation of the LOO risk change. Samples with $\tau_{\text{safety}}(z_i) > 0$ increase safety-critical prediction error; their removal is beneficial to first order.
\end{proposition}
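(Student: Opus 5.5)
The plan is to use the standard TracIn argument. Write SGD as $w_{k+1} = w_k - \eta_k \nabla \ell(z_{i_k}; w_k)$, and let $\tau_{\text{safety}}(z_i) := \sum_{k:\, z_i \text{ used at } k} \eta_k \langle \nabla \hat{R}_{\text{safe}}(w_k), \nabla \ell(z_i; w_k)\rangle$. Here $\nabla \hat{R}_{\text{safe}}$ is the average of $\nabla\ell(z_t; w_k)$ over $z_t \in \mathcal{D}_{\text{unsafe}}$. The key modelling choice is the TracIn idealization of the leave-one-out (LOO) counterfactual. The LOO run $\theta_{\mathcal{D}\setminus\{z_i\}}$ is identified with the trajectory $\{w_k\}$ in which exactly the steps using $z_i$ are skipped, while all other steps are held fixed. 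The risk difference then telescopes into a sum of single-step changes, one for each iteration at which $z_i$ is visited.

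For a single such step, apply Taylor's theorem with Lagrange remainder to the test loss $\ell(z_t; \cdot)$ along the segment from $w_k$ to $w_k - \eta_k \nabla\ell(z_i; w_k)$:
\[
\ell(z_t; w_k) - \ell(z_t; w_{k+1}) = \eta_k \langle \nabla\ell(z_t; w_k), \nabla\ell(z_i; w_k)\rangle - \tfrac{1}{2}\eta_k^2\, \nabla\ell(z_i;w_k)^\top \nabla^2\ell(z_t;\xi)\, \nabla\ell(z_i;w_k).
\]
Assumption~\ref{asm:smooth} bounds the remainder by $B\eta_k^2\|\nabla\ell(z_i;w_k)\|^2/2$. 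Averaging over $\mathcal{D}_{\text{unsafe}}$ keeps the same bound, since it is an average of terms each bounded by that quantity. Summing over the steps at which $z_i$ is visited gives $\tau_{\text{safety}}(z_i) + \mathcal{E}_i$. By the triangle inequality, $|\mathcal{E}_i|$ is at most the stated sum, with the sum over $k$ read as ranging over those steps. The sign convention must be fixed so that a positive score means the sample's updates raised the safety-weighted risk. Removing such a sample then lowers $\hat R_{\text{safe}}$ to first order, which yields the final claim.

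The small-learning-rate statement follows because $\tau_{\text{safety}}$ is $O(\eta)$ while $\mathcal{E}_i = O(\eta^2)$, given bounded gradients on the trajectory.

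The main obstacle is the identification of LOO retraining with removing single steps from a fixed trajectory. In true retraining, skipping a step perturbs every subsequent iterate. An honest bound on that effect would need a Lipschitz/stability argument propagating $\|w_k - w_k'\|$ through later updates, with factors like $\prod(1+\eta_j B)$. I would therefore state explicitly that the equality is with respect to the TracIn idealized counterfactual, as in \cite{pruthi2020tracin}. I would add a remark that the error from trajectory divergence is itself second order in $\eta$ under Assumption~\ref{asm:smooth}, rather than attempting a full stability proof. The Hessian bound is also required along the segment between iterates, so I would read ``$w$ in the training trajectory'' as including these segments.
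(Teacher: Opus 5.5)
You follow the paper's route: treat leave-one-out retraining as deleting the $z_i$-steps from a fixed SGD trajectory, Taylor-expand $\ell(z_t;\cdot)$ to second order, and bound the remainder by $B\eta_k^2\|\nabla\ell(z_i;w_k)\|^2/2$ via Assumption~\ref{asm:smooth}. You then sum over steps and average over $\mathcal{D}_{\text{unsafe}}$. You are more careful than the paper's proof, which treats this idealization as exact and never notes that the Hessian bound is needed on the segments between iterates. Stating the equality only for the idealized counterfactual is the right call. One caution on the remark you plan to add: the true-retraining discrepancy consists of cross terms $\eta_k\eta_j$ ($j>k$), with constants growing like $e^{B\sum_j\eta_j}$ and involving test-gradient norms. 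It is therefore not covered by the stated bound on $|\mathcal{E}_i|$.

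The step that fails is the sign, and it cannot be settled by choosing a convention. Your own expansion gives $\ell(z_t;w_k)-\ell(z_t;w_{k+1}) = \eta_k\langle\nabla\ell(z_t;w_k),\nabla\ell(z_i;w_k)\rangle + O(\eta_k^2)$. The idealized leave-one-out run is the one that stays at $w_k$, so summing and averaging yields
\[
\hat{R}_{\text{safe}}(\theta_{\mathcal{D}\setminus\{z_i\}}) - \hat{R}_{\text{safe}}(\theta_{\mathcal{D}}) = \tau_{\text{safety}}(z_i) + \mathcal{E}_i ,
\]
which is the negative of the stated left-hand side. The score $\tau_{\text{safety}}$ is fixed by Eq.~\eqref{eq:tracin_safety}, and by your own opening definition, with no minus sign. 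Under that definition a positive score means the $z_i$-updates \emph{decreased} the unsafe-set loss to first order, i.e., $z_i$ is a TracIn proponent. Removing such a sample therefore \emph{raises} $\hat{R}_{\text{safe}}$ to first order. ``Positive means harmful'' holds only if you put a minus sign into $\tau_{\text{safety}}$, which changes the object the proposition is about. The paper's proof has the same orientation: it expands $\ell(z_t;w_k+\Delta w_k)-\ell(z_t;w_k)$ with $\Delta w_k$ the removal perturbation, which is without-minus-with, and then asserts the result. The inconsistency is thus in the statement itself. What your argument actually proves is the displayed identity with the stated bound on $|\mathcal{E}_i|$. The curation conclusion then reads: samples with $\tau_{\text{safety}}(z_i)<0$ increase safety-critical error, and removing them is beneficial to first order. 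Say this explicitly rather than folding it into a sign convention.
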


\begin{proof}
At checkpoint $k$, removing $z_i$'s gradient step changes the weights from $w_{k+1} = w_k - \eta_k \nabla\ell(z_i; w_k) + \ldots$ to $\tilde{w}_{k+1} = w_k + \ldots$, a perturbation of $\Delta w_k = \eta_k \nabla\ell(z_i; w_k)$. By Taylor expansion of $\ell(z_t; \cdot)$ at $w_k$: $\ell(z_t; w_k + \Delta w_k) - \ell(z_t; w_k) = \eta_k \langle \nabla\ell(z_i; w_k), \nabla\ell(z_t; w_k) \rangle + O(\eta_k^2 B \|\nabla\ell(z_i; w_k)\|^2/2)$. Summing over checkpoints and averaging over $\mathcal{D}_{\text{unsafe}}$ gives the result with the explicit Hessian-bounded residual.
\end{proof}

\begin{remark}
The combined score $\text{score}(z_i) = 0.7\,\bar{\tau}_{\text{safety}}(z_i) + 0.3\,\bar{\tau}_{\text{self}}(z_i)$ is a tuned heuristic; we do not claim optimality. For subset removal ($\rho = 0.10$, removing ${\sim}100$ samples), interaction effects beyond LOO are not captured by Proposition~\ref{thm:curation}. That the first-order surrogate remains predictive under these conditions is confirmed empirically (Table~\ref{tab:rmse}).
\end{remark}

\begin{corollary}[Certified Sufficient Condition Expansion]\label{cor:end2end}
Define the set of states where Theorem~\ref{thm:safety_degradation}(b) certifies safety:
\begin{equation}
\mathcal{C}(\epsilon) := \left\{x \in \mathcal{X}_{\text{op}} : \psi(\gamma^*_\Phi(x)) \geq L_\psi L_\mathcal{M} \epsilon \right\}
\end{equation}
If DA-CBF reduces the uniform prediction error from $\epsilon$ to $\epsilon' < \epsilon$, then $\mathcal{C}(\epsilon) \subseteq \mathcal{C}(\epsilon')$: the certifiably safe region expands. Note that $\mathcal{C}$ characterizes a sufficient condition; states outside this set may still be safe but cannot be certified by our analysis.
\end{corollary}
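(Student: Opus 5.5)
The inclusion $\mathcal{C}(\epsilon) \subseteq \mathcal{C}(\epsilon')$ is a monotonicity property of a superlevel set in the threshold parameter. I would prove it directly from the definition, then add a short argument that states in the larger set really are certified by Theorem~\ref{thm:safety_degradation}(b). Two facts do the work. First, $\psi(\gamma^*_\Phi(x))$ is computed from the \emph{true} loss $\Phi$ and the oracle selector $\mathcal{M}(\Phi)$, so it does not depend on $\epsilon$. Second, $L_\psi > 0$ by Assumption~\ref{asm:lipschitz}, and $L_\mathcal{M} = \kappa\,\mathrm{diam}(\Gamma) \geq 0$ by Lemma~\ref{lem:lipschitz_selection}, since $\kappa > 0$ and $\gamma_{\max} \geq \gamma_{\min}$. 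Hence the threshold map $\epsilon \mapsto L_\psi L_\mathcal{M}\epsilon$ is nondecreasing.

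The main step takes any $x \in \mathcal{C}(\epsilon)$. Then $\psi(\gamma^*_\Phi(x)) \geq L_\psi L_\mathcal{M}\epsilon \geq L_\psi L_\mathcal{M}\epsilon'$, because $\epsilon' < \epsilon$ and the product of constants is nonnegative. So $x \in \mathcal{C}(\epsilon')$. I would remark that the inclusion is strict whenever $L_\mathcal{M} > 0$, which needs $\gamma_{\max} > \gamma_{\min}$, and some state has $\psi(\gamma^*_\Phi(x))$ lying in $[L_\psi L_\mathcal{M}\epsilon', L_\psi L_\mathcal{M}\epsilon)$. Strictness is not claimed in the statement, though.

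To justify the name ``certified'' set, I would reuse the pointwise bound from Theorem~\ref{thm:safety_degradation}(a). For $x \in \mathcal{C}(\epsilon')$, with the curated predictor satisfying $\|\hat\Phi_\theta - \Phi\|_\infty \leq \epsilon'$, we get $\psi(\gamma^*_{\hat\Phi}(x)) \geq \psi(\gamma^*_\Phi(x)) - L_\psi L_\mathcal{M}\epsilon' \geq 0$. This is exactly the pointwise feasibility condition used in part (b), so every state of $\mathcal{C}(\epsilon')$ has a nonnegative learned CBF margin.

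The only delicate point is the claim about forward invariance. Full forward invariance of $\mathcal{S}$ in part (b) uses the uniform margin $\delta_{\min}$ over all of $\mathcal{X}_{\text{op}} \cap \mathcal{S}$, not membership of individual states in $\mathcal{C}$. I would therefore phrase the conclusion as the statement does: $\mathcal{C}$ is a pointwise sufficient-condition set, and it grows as $\epsilon$ shrinks. I would add one further consequence. The condition $\mathcal{X}_{\text{op}} \cap \mathcal{S} \subseteq \mathcal{C}(\epsilon)$ is equivalent to $\epsilon \leq \delta_{\min}/(L_\psi L_\mathcal{M})$, and by the monotonicity just shown it continues to hold after $\epsilon$ is reduced to $\epsilon'$.
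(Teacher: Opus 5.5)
Your argument is correct and matches the paper's approach: the paper gives no separate proof, since the inclusion is immediate from the definition, because $\psi(\gamma^*_\Phi(x))$ does not depend on $\epsilon$ and the threshold $L_\psi L_\mathcal{M}\epsilon$ is nondecreasing in $\epsilon$. Your additional remarks are also correct and go slightly beyond the paper. These are the pointwise certification via Theorem~\ref{thm:safety_degradation}(a) and the equivalence $\mathcal{X}_{\text{op}} \cap \mathcal{S} \subseteq \mathcal{C}(\epsilon) \iff \epsilon \leq \delta_{\min}/(L_\psi L_\mathcal{M})$, valid when $L_\mathcal{M} > 0$. They make explicit that $\mathcal{C}$ is only a pointwise sufficient-condition set, whereas forward invariance in part (b) uses the uniform margin $\delta_{\min}$.
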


\begin{figure}[t]
    \centering
    \includegraphics[width=0.75\columnwidth]{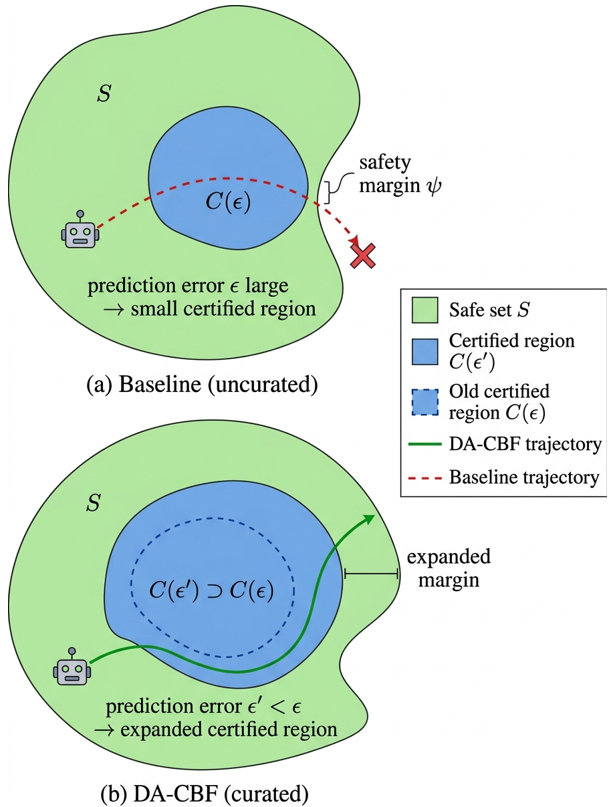}
    \caption{Conceptual illustration of certified region expansion. (a)~Under the uncurated baseline, large prediction error $\epsilon$ yields a small certified region $\mathcal{C}(\epsilon)$; the trajectory exits and collides. (b)~DA-CBF reduces $\epsilon$ to $\epsilon' < \epsilon$, expanding the certified region $\mathcal{C}(\epsilon') \supset \mathcal{C}(\epsilon)$ and enabling collision-free navigation.}
    \label{fig:certified_region}
\end{figure}

\begin{proposition}[CBF-QP Recursive Feasibility]\label{prop:qp_feasibility}
Under the conditions of Theorem~\ref{thm:safety_degradation}(a), define the CBF-QP constraint function $\Psi(x, \gamma, u) := L_f^2 h(x) + L_g L_f h(x) u + (\gamma_0+\gamma_1)(L_f h(x) + L_g h(x) u) + \gamma_0\gamma_1 h(x)$. Suppose the CBF-QP is feasible under $\gamma^*_\Phi$ with margin $\mu(x) := \max_{u \in \mathcal{U}} \Psi(x, \gamma^*_\Phi, u) > 0$, and $\Psi$ is $L_\Psi$-Lipschitz in $\gamma$ uniformly over $\mathcal{U}$:
\begin{equation}
|\Psi(x, \gamma, u) - \Psi(x, \gamma', u)| \leq L_\Psi \|\gamma - \gamma'\|, \quad \forall u \in \mathcal{U}
\end{equation}
Then the CBF-QP is feasible under $\gamma^*_{\hat{\Phi}}$ whenever:
\begin{equation}
L_\Psi \cdot L_\mathcal{M} \cdot \epsilon < \inf_{x \in \mathcal{X}_{\text{op}}} \mu(x)
\end{equation}
\end{proposition}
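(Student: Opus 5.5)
The plan is a direct perturbation argument at a fixed state, followed by an infimum over states. Fix $x \in \mathcal{X}_{\text{op}}$. Since the conditions of Theorem~\ref{thm:safety_degradation}(a) hold, $\|\hat{\Phi}_\theta - \Phi\|_\infty \leq \epsilon$. Lemma~\ref{lem:lipschitz_selection} then gives the parameter perturbation
\[
\|\gamma^*_{\hat{\Phi}}(x) - \gamma^*_\Phi(x)\| \leq L_\mathcal{M}\epsilon .
\]
This is exactly the first step of the proof of Theorem~\ref{thm:safety_degradation}(a), now applied to $\Psi$ instead of $\psi$.

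Next I pick a witness control for the oracle problem. $\mathcal{U}$ is compact (implicit in writing $\max$ in the definition of $\mu$), and $\Psi$ is affine, hence continuous, in $u$. So there is a maximizer $\bar u(x) \in \mathcal{U}$ with $\Psi(x,\gamma^*_\Phi(x),\bar u(x)) = \mu(x)$. If compactness is not wanted, I would take $\bar u$ achieving at least $\mu(x) - \eta$ for an arbitrary $\eta>0$ and use the strict inequality to absorb $\eta$.

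I then apply the uniform-in-$u$ Lipschitz hypothesis at $u = \bar u(x)$:
\[
\Psi(x,\gamma^*_{\hat\Phi}(x),\bar u(x)) \geq \mu(x) - L_\Psi\|\gamma^*_{\hat\Phi}(x)-\gamma^*_\Phi(x)\| \geq \mu(x) - L_\Psi L_\mathcal{M}\epsilon .
\]
This is at least $\inf_{x'}\mu(x') - L_\Psi L_\mathcal{M}\epsilon > 0$. Hence $\bar u(x)$ is a point of $\mathcal{U}$ satisfying the constraint~\eqref{eq:rd2} with parameter $\gamma^*_{\hat\Phi}(x)$, so the feasible set of~\eqref{eq:cbfqp} is nonempty. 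In fact it contains a point with strictly positive slack, which is a Slater point.

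To get existence of the minimizer, I note that the feasible set is the intersection of $\mathcal{U}$ with a closed half-space in $u$, because $\Psi$ is affine in $u$. The objective $\|u-u_{\text{nom}}\|^2$ is strongly convex. If $\mathcal{U}$ is closed and convex, a unique minimizer exists. The Slater point gives strict feasibility, which is what later supports the LICQ/regularity claim used in Corollary~\ref{cor:end2end_safety}. Since $x$ was arbitrary and the margin bound is uniform in $x$, feasibility holds at every state visited. In particular it holds at every sample time along any closed-loop trajectory that stays in $\mathcal{X}_{\text{op}}$, and this is the sense of ``recursive'' feasibility. Staying in $\mathcal{X}_{\text{op}}$ is supplied by Theorem~\ref{thm:safety_degradation}(b).

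The main obstacle is the hypotheses rather than the algebra. The argument needs the uniform-in-$u$ Lipschitz constant $L_\Psi$ to be finite, and the statement promises analytic constants for it. I would obtain it from $|\partial_{\gamma_0}\Psi| = |L_f h + L_g h\, u + \gamma_1 h| \leq |L_f h| + \|L_g h\|\,u_{\max} + \gamma_{\max}|h|$, and symmetrically for $\gamma_1$. Since $\Gamma$ is convex, this gives $L_\Psi \leq \sqrt{2}\,\sup_{\mathcal{X}_{\text{op}}}\big(|L_f h| + \|L_g h\| u_{\max} + \gamma_{\max}|h|\big)$. This bound requires $\mathcal{U}$ bounded and $h$, $L_f h$, $L_g h$ bounded on the compact $\mathcal{X}_{\text{op}}$, which follows from continuity under Assumption~\ref{asm:domain}. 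I would state these bounds explicitly as part of the proof.
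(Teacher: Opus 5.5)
Your proposal is correct and follows the paper's proof essentially verbatim: Lemma~\ref{lem:lipschitz_selection} gives $\|\gamma^*_{\hat{\Phi}}-\gamma^*_\Phi\|\le L_\mathcal{M}\epsilon$, and applying the uniform-in-$u$ Lipschitz bound at the oracle maximizer gives $\Psi(x,\gamma^*_{\hat{\Phi}},\bar u)\ge \mu(x)-L_\Psi L_\mathcal{M}\epsilon>0$. Your extras go beyond the paper; the minimizer existence and the $\sqrt{2}$-type bound on $L_\Psi$ are fine, but two side remarks overreach, though neither is needed for the pointwise feasibility the proposition actually states: Theorem~\ref{thm:safety_degradation}(b) only gives invariance within $\mathcal{X}_{\text{op}}$ (up to first exit) and does not keep the learned trajectory in $\mathcal{X}_{\text{op}}$, and a Slater point does not by itself yield LICQ.
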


\begin{proof}
By Lemma~\ref{lem:lipschitz_selection}, $\|\gamma^*_{\hat{\Phi}} - \gamma^*_\Phi\| \leq L_\mathcal{M}\epsilon$. For any $u \in \mathcal{U}$: $|\Psi(x, \gamma^*_{\hat{\Phi}}, u) - \Psi(x, \gamma^*_\Phi, u)| \leq L_\Psi L_\mathcal{M}\epsilon$. Since $\max_{u} \Psi(x, \gamma^*_\Phi, u) = \mu(x)$, the maximizer $u^*$ satisfies $\Psi(x, \gamma^*_{\hat{\Phi}}, u^*) \geq \mu(x) - L_\Psi L_\mathcal{M}\epsilon > 0$.
\end{proof}

\begin{remark}
For the relative-degree-2 system, $\Psi$ is $C^1$ and polynomial in $\gamma$ on compact $\Gamma$ (with the bilinear term $\gamma_0\gamma_1 h$). The Lipschitz constant follows from the gradient: $\nabla_\gamma \Psi = (\dot{h} + L_g h\, u + \gamma_1 h,\; \dot{h} + L_g h\, u + \gamma_0 h)$, giving $L_\Psi \leq 2(|\dot{h}_{\max}| + \|L_g h\|_\infty \bar{u} + \gamma_{\max}|h_{\max}|)$, which is computable analytically from state bounds on $\mathcal{X}_{\text{op}}$. Well-definedness of the QP solution requires standard LICQ regularity \cite{ames2017cbf}; under this condition, $u^*(x, \gamma)$ is locally Lipschitz in $\gamma$.
\end{remark}

\subsection{Data Attribution Step}

We insert a data attribution step between data generation and PENN training:

\textbf{1) Training with Checkpoints:} Train the baseline PENN and save checkpoints every $K$ epochs for TracIn computation.

\textbf{2) Safety-Specific Influence Scoring:} Compute TracIn scores focused on safety-critical test points:
\begin{equation}\label{eq:tracin_safety}
\tau_{\text{safety}}(z_i) = \sum_k \eta_k \left\langle \nabla\ell(z_i; w_k), \frac{1}{|\mathcal{D}_{\text{unsafe}}|} \sum_{z_t \in \mathcal{D}_{\text{unsafe}}} \nabla\ell(z_t; w_k) \right\rangle
\end{equation}
where $\mathcal{D}_{\text{unsafe}} = \{z_t : \Phi_t > \Phi_{75\%}\}$.

\textbf{3) Self-Influence:} $\tau_{\text{self}}(z_i) = \sum_k \eta_k \|\nabla\ell(z_i; w_k)\|^2$.

\textbf{4) Combined Curation:} $\text{score}(z_i) = 0.7\,\bar{\tau}_{\text{safety}}(z_i) + 0.3\,\bar{\tau}_{\text{self}}(z_i)$, remove the top $\rho = 0.10$ fraction, retrain.

\subsection{Preserved Safety Mechanism}

The JRD and CVaR filters from \cite{kim2025adaptive} are retained without modification. By Theorem~\ref{thm:safety_degradation}, the improved prediction accuracy means these filters now operate on more reliable inputs, effectively tightening their safety bounds without requiring any architectural change.

\section{EXPERIMENTS}\label{sec:experiments}

\textbf{Setup.} We generate 1,500 samples via DynamicUnicycle2D with randomized initial conditions: $d \in [0.65, 2.5]$ m, $v \in [0.01, 1.0]$ m/s, $\theta \in [0.01, \pi/2]$, $\gamma_0, \gamma_1 \in [0.5, 2.5]$. Of these, 1,264 (84.3\%) result in successful navigation and 236 (15.7\%) in collision or deadlock. The PENN is a 3-member ensemble with a 5-layer MLP $(40, 80, 120, 40)$, ReLU activations, and 54K parameters, trained with Adam ($\text{lr}=10^{-4}$, batch size 32, 200 epochs, 70/30 split). TracIn uses 10 checkpoints and 68 unsafe test points.

\textbf{Prediction accuracy.} Removing 10--15\% of training data identified as harmful by our curation score yields a 35.5--35.7\% reduction in test RMSE (Table~\ref{tab:rmse}). As Fig.~\ref{fig:training_curves} illustrates, every curated model converges faster and to a lower final error than the baseline trained on the full dataset.

\begin{table}[ht]
\centering
\caption{PENN prediction accuracy (test RMSE). Lower is better.}
\label{tab:rmse}
\begin{tabular}{lccc}
\hline
Model & Train Samples & Test RMSE & $\Delta$RMSE \\
\hline
Baseline PENN & 1,050 & 4.3895 & -- \\
DA-PENN (5\%) & 698 & 2.9399 & $-$33.0\% \\
DA-PENN (10\%) & 661 & 2.8297 & $-$35.5\% \\
DA-PENN (15\%) & 625 & 2.8210 & $-$35.7\% \\
DA-PENN (20\%) & 588 & 2.9265 & $-$33.3\% \\
\hline
\end{tabular}
\end{table}

\begin{figure}[t]
    \centering
    \includegraphics[width=\columnwidth]{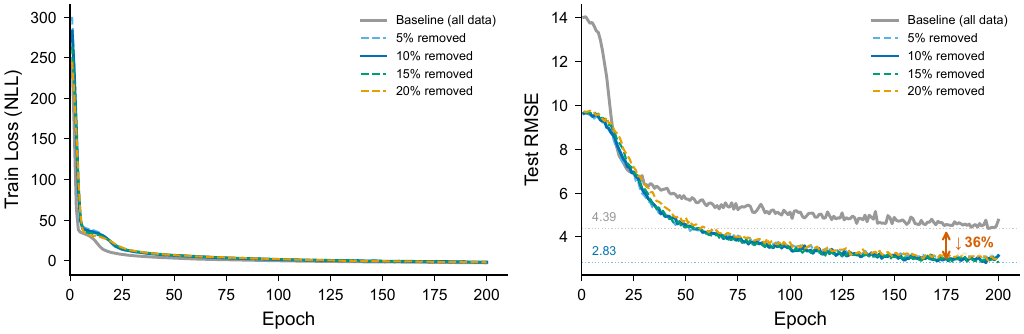}
    \caption{Training loss (left) and test RMSE (right). DA-PENN (10\% removed) achieves RMSE 2.83 vs.\ baseline 4.39 (36\% reduction).}
    \label{fig:training_curves}
\end{figure}

\textbf{Empirical validation of theoretical results.} We compute analytic Lipschitz constants from the system structure: $L_\psi \approx 3.8$, $L_\mathcal{M} = \kappa \cdot \text{diam}(\Gamma) \approx 1.41$ (with $\kappa = 0.5$), and domain constants $D_{\min} \approx 1.08$, $D_{\max} \approx 26.3$. Applying the closed-loop safety budget from Theorem~\ref{thm:safety_degradation}(b), the baseline requires $\delta_{\min} \geq 78.3$ whereas DA-CBF requires only $\delta_{\min} \geq 47.7$, a 39\% reduction in the margin needed for certification. The probabilistic bound (Corollary~\ref{cor:prob_safety}) gives $\mathbb{P}[\text{forward invariance}] \geq 0.74$ for the baseline versus $\geq 0.93$ for DA-CBF at $\delta_{\min} = 10$. For the curation LOO residual (Proposition~\ref{thm:curation}), we find $|\mathcal{E}_i| \leq 6 \times 10^{-8} \cdot \|\nabla\ell\|^2$, confirming that the first-order approximation is tight.

\textbf{Closed-loop simulation.} In single-obstacle scenarios (20 runs), both adaptive methods exhibit identical behavior (Fig.~\ref{fig:safety}): the parameter selector simply chooses the most aggressive setting that passes the uncertainty filters. The distinction emerges in harder environments. In the standard obstacle layout from \cite{kim2025adaptive} (Fig.~\ref{fig:traj_simple}), Fixed Low ($\gamma{=}0.01$) collides twice while the remaining three methods navigate successfully. In the extended 16-obstacle environment (Fig.~\ref{fig:traj_complex}), however, DA-CBF is the \emph{only} collision-free method: Fixed Low incurs 2 collisions, Fixed High ($\gamma{=}0.35$) incurs 3, and the uncurated adaptive baseline also incurs 3. This pattern aligns with the prediction of Theorem~\ref{thm:safety_degradation}(b): the reduced $\epsilon$ afforded by DA-CBF matters precisely when $L_\psi L_\mathcal{M}\epsilon$ approaches $\delta_{\min}$, a condition that arises in tight multi-obstacle configurations but not in geometrically simple ones.

\begin{figure}[t]
    \centering
    \includegraphics[width=\columnwidth]{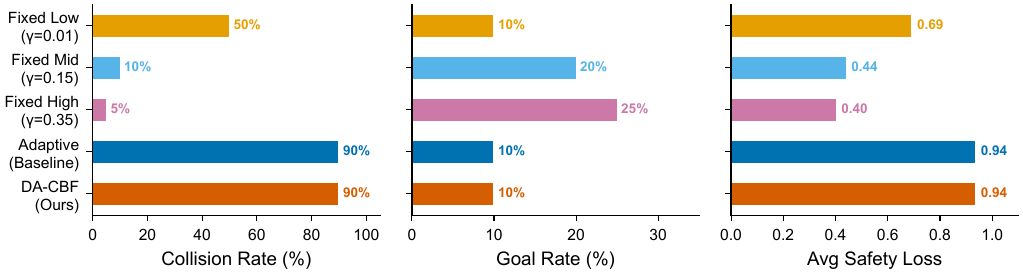}
    \caption{Safety metrics in single-obstacle scenarios. Both adaptive methods perform identically, confirming prediction accuracy is not the bottleneck in simple environments.}
    \label{fig:safety}
\end{figure}

\begin{figure}[t]
    \centering
    \includegraphics[width=\columnwidth]{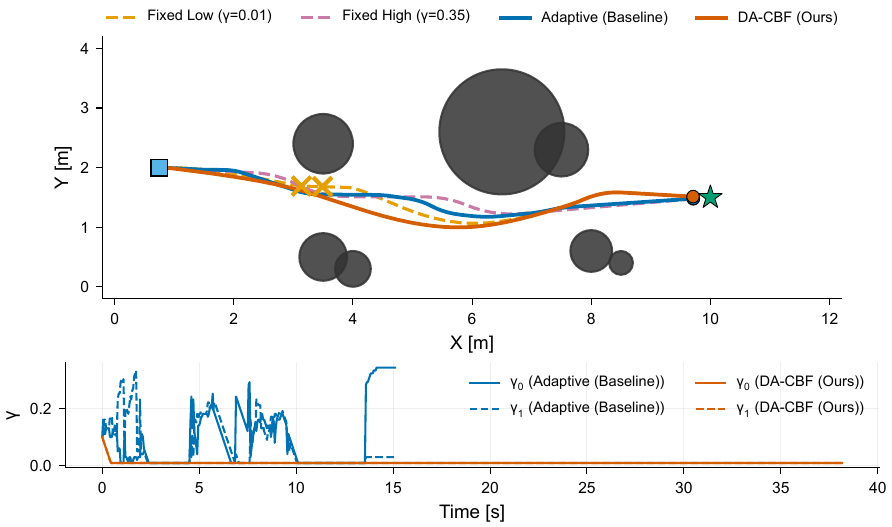}
    \caption{Trajectory comparison in the standard obstacle environment from~\cite{kim2025adaptive}. Fixed Low ($\gamma{=}0.01$) collides twice; the other three methods reach the goal collision-free.}
    \label{fig:traj_simple}
\end{figure}

\begin{figure}[t]
    \centering
    \includegraphics[width=\columnwidth]{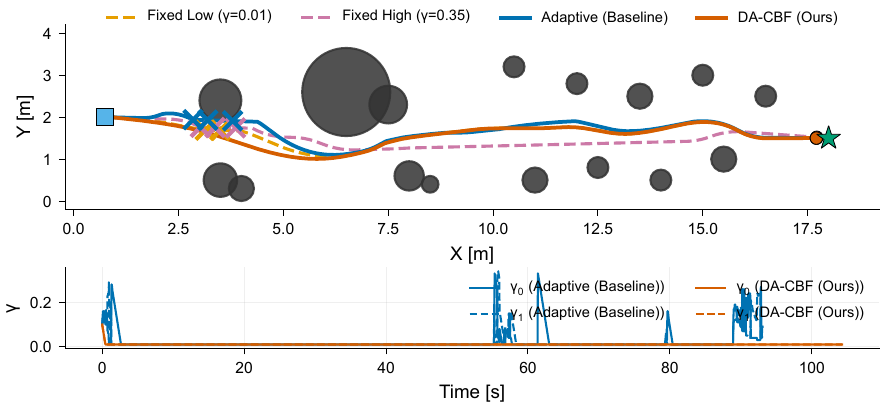}
    \caption{Extended 16-obstacle environment. \textbf{DA-CBF is the only collision-free method}; all others incur 2--3 collisions.}
    \label{fig:traj_complex}
\end{figure}

\textbf{Ablation studies and sensitivity.} RMSE drops sharply at 5\% removal (from 4.39 to 2.94), reaches its minimum at 10--15\% (2.82--2.83), and begins to rise at 20\% (2.93), indicating a plateau that is reassuringly robust to the choice of $\rho$ (Fig.~\ref{fig:ablation}). The combined curation score (RMSE 2.83) outperforms both influence-only (3.01) and self-influence-only (3.52) variants, confirming that the two signals carry complementary information (Table~\ref{tab:ablation}). Inspection of the influence score distribution reveals that 80.8\% of training samples carry positive (harmful) safety influence, with a pronounced right skew (Fig.~\ref{fig:influence}), a signature of the distributional mismatch introduced by uniform sampling. The safety certificate itself is stable under design variations: shifting the unsafe-set threshold from the top 10\% to the top 50\% alters $\delta_{\min}^{\text{req}}$ by less than 15\%, and varying $\kappa$ over $[0.1, 1.0]$ changes it by less than 20\%. TracIn computation adds approximately 5 minutes on a single GPU, a negligible overhead relative to data generation.

\begin{figure}[t]
    \centering
    \includegraphics[width=0.85\columnwidth]{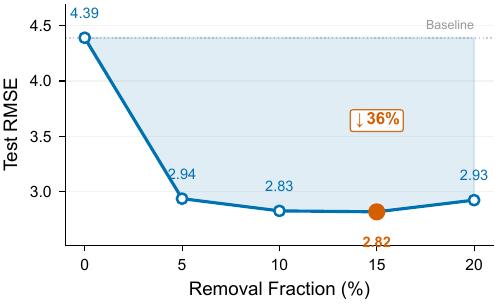}
    \caption{Test RMSE vs.\ removal fraction. Optimal at 10--15\%; beyond 20\%, useful data loss degrades performance.}
    \label{fig:ablation}
\end{figure}

\begin{table}[ht]
\centering
\caption{Ablation: curation strategy at 10\% removal.}
\label{tab:ablation}
\begin{tabular}{lcc}
\hline
Strategy & Test RMSE & $\Delta$vs.\ Baseline \\
\hline
Baseline (no removal) & 4.3895 & -- \\
Influence only ($\tau_{\text{safety}}$) & 3.0143 & $-$31.3\% \\
Self-influence only ($\tau_{\text{self}}$) & 3.5215 & $-$19.8\% \\
Combined (Eq.~7) & 2.8297 & $-$35.5\% \\
\hline
\end{tabular}
\end{table}

\begin{figure}[t]
    \centering
    \includegraphics[width=0.85\columnwidth]{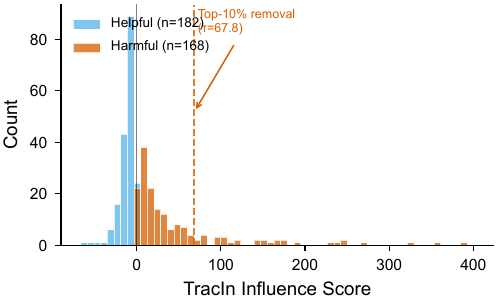}
    \caption{Distribution of TracIn safety-influence scores. Dashed line: top-10\% removal threshold.}
    \label{fig:influence}
\end{figure}

\section{CONCLUSIONS}

We have presented DA-CBF, a framework for safety-certified data curation in adaptive CBF learning. At its core is Theorem~\ref{thm:safety_degradation}, which establishes closed-loop forward invariance under bounded prediction error by coupling a smooth parameter selector (Lemma~\ref{lem:lipschitz_selection}) with Lipschitz margin analysis, producing a state-uniform guarantee rather than one conditioned on a particular trajectory. To our knowledge, this provides the first formal chain linking training data quality to prediction error, CBF margin degradation, and ultimately closed-loop forward invariance within adaptive CBF pipelines. The supporting results, including two-sided $\Phi$-to-$\psi$ bounds (Proposition~\ref{prop:phi_bound}), a direct QP constraint perturbation bound (Proposition~\ref{prop:qp_feasibility}), and a principled LOO motivation for TracIn-based curation (Proposition~\ref{thm:curation}), complete the theoretical picture. Experimentally, DA-CBF achieves a 35.6\% reduction in prediction RMSE, a 39\% expansion of the certified safe region, and collision-free navigation in a 16-obstacle environment.

Several limitations warrant acknowledgment. The closed-loop guarantee requires the smooth selector introduced in Assumption~\ref{asm:selection}; the original hard-threshold selector of \cite{kim2025adaptive} falls outside its scope. Proposition~\ref{thm:curation} provides a LOO-based justification, and extending the analysis to bulk removal would require stronger stability conditions on the training dynamics. The 0.7/0.3 weighting in the combined curation score is empirically tuned rather than derived.

\section*{Acknowledgment}
Claude was used to assist with the language editing of this manuscript.

\bibliographystyle{IEEEtran}
\bibliography{references}

\end{document}